\long\def\symbolfootnote[#1]#2{\begingroup
\def\thefootnote{\fnsymbol{footnote}}\footnote[#1]{#2}\endgroup}
\newtheorem{theorem}{Theorem}[section]
\newtheorem{lemma}[theorem]{Lemma}
\newtheorem{prop}[theorem]{Proposition}
\newtheoremstyle{named}{}{}{\itshape}{}{\bfseries}{.}{.5em}{\thmnote{#3 }#1}
\theoremstyle{named}
\newtheorem*{namedtheorem}{Theorem}
\newtheorem*{namedlemma}{Lemma}
\theoremstyle{plain}
\newtheorem{mainlemma}{Lemma}
\theoremstyle{definition}
\newcommand{\executeiffilenewer}[3]{%
\ifnum\pdfstrcmp{\pdffilemoddate{#1}}%
{\pdffilemoddate{#2}}>0%
{\immediate\write18{#3}}\fi%
}
\newcommand{%
\executeiffilenewer{.svg}{.pdf}%
{inkscape -z -D --file=.svg %
--export-pdf=.pdf --export-latex}%
\input{.pdf_tex}%
}[1]{%
\executeiffilenewer{#1.svg}{#1.pdf}%
{inkscape -z -D --file=#1.svg %
--export-pdf=#1.pdf --export-latex}%
\input{#1.pdf_tex}%
}
\begin{document}

\title{Acute Triangulation of constant curvature polygonal complexes}

\author[F.~Brunck]{Florestan Brunck}
\address{Institute of Science and Technology, Austria\\
Am Campus 1, 3400, Klosterneurburg, Lower Austria, Austria}
\email{florestan.brunck@ist.ac.at}

\maketitle

\begin{abstract}
\noindent
We prove that every 2-dimensional polygonal complex, where each polygon is given a constant curvature metric and belongs to one of finitely many isometry classes can be triangulated using only acute simplices. There is no requirement on the complex to be finite or even locally finite. 
\end{abstract}

\section{Introduction}
\label{intro}

In this section, we provide some brief historical background to situate our results, introduce some preliminary definitions and notations, and state our main result.

Previous existence results of acute refining triangulations focused heavily on the Euclidean setting, where results have been obtained in full generality and with increasingly detailed analysis (see for example the original proof of Buraglo and Zalgaler \cite{BZ} or more recently Saraf for Euclidean polyhedral surfaces \cite{SA}). While there has been a lot of recent developments in Euclidean acute triangulations and meshing algorithms (\cite{Zam}), very little is known in the non-Euclidean setting. To the best of our knowledge, there currently exists only two results dealing in some generality with acute triangulations in a non-Euclidean setting. The first characterises combinatorially all the acute triangulations of the sphere (\cite{KW}), and the second (more closely related to ours) establishes an existence result for Riemannian surfaces, with the caveat of only offering a highly non constructive approach (\cite{CDV}). We also mention in passing some results which focused on establishing minimal acute triangulations on non planar flat surfaces (see for example \cite{Yuan-Zam} and \cite{Itoh-Yuan}), but to our knowledge, the only instance which does not deal with a flat metric is the very specific case of spherical triangles (\cite{IZ}). 

In this present work, we provide an existence result for the general class of constant curvature polygonal complexes, as a stepping stone to tackle the most general 2-dimensional case of Riemannian polygonal complexes. Our proof is constructive and provides a simple and explicit algorithm to output the acute triangulation, providing a tractable link between known methods in the Euclidean case and the new general constant curvature case.

In our setting, all geodesics will be taken to be minimal. Throughout the article, $M_{\kappa}^{2}$ will stand for the complete, simply-connected Riemannian 2-manifold of constant sectional curvature $\kappa$. A \textit{polygon} $P$ in $M_{\kappa}^{2}$ is defined as a compact subset of $M_{\kappa}^{2}$ bounded by a polygonal Jordan curve consisting of finitely many geodesic segments. We point out to the reader that, according to our definition in the $\kappa>0$ case, a polygonal Jordan curve defines exactly two polygons. Note also that we do not require the geodesics forming the polygonal Jordan curve to be unique geodesics, and such a polygon may very well form a non-convex set. A \textit{constant curvature polygonal complex} is a 2-dimensional cell complex (possibly infinite and not locally finite) where each cell is taken to be one of finitely many different polygons $P_1,P_2,\ldots,P_N$, each respectively lying on the surface of constant curvature $M_{\kappa_1}^2,M_{\kappa_2}^2,\ldots,M_{\kappa_N}^2$. 

Our main result is the following:

\begin{namedtheorem}[Main]
\label{thm:main}
For any constant curvature polygonal complex $C$, there exists an acute refining triangulation of $C$. 
\end{namedtheorem}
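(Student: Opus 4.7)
The plan is to exploit the finiteness hypothesis (only finitely many isometry classes of polygons $P_1,\ldots,P_N$) to reduce the global triangulation problem to finitely many local ones. The key observation is that if each edge is subdivided based solely on its intrinsic metric (e.g.\ equally spaced points with spacing at most a uniform $\epsilon > 0$), then the subdivision on every edge of $C$ is automatically consistent across all incident polygons, no matter how many meet there. Thus, after fixing such a boundary subdivision scheme, it suffices to extend the chosen subdivision of $\partial P_i$ to an acute triangulation of each $P_i$.

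First I would choose the fineness $\epsilon > 0$ small enough, depending on the finite list of polygon types and their curvatures $\kappa_i$, so that any geodesic triangle of diameter at most $\epsilon$ inside any $P_i$ is \emph{almost Euclidean}: by Gauss--Bonnet, its angles deviate from their Euclidean counterparts by at most $|\kappa_i|\cdot\operatorname{Area} \leq C\epsilon^{2}$. Subdivide every edge of $C$ into geodesic segments of length at most $\epsilon$. Because edge length is an isometric invariant, this subdivision automatically matches on shared edges regardless of which two polygon types are adjacent.

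Next, for each polygon type $P_i$, I would produce an acute extension of the prescribed boundary subdivision using a three-region decomposition: \emph{corner patches} cut off near each vertex by a short geodesic cross-cut and filled by a fan of thin isoceles triangles (one triangle for a corner angle $\alpha < \pi/2$; otherwise a fan of $\lceil 2\alpha/\pi\rceil$ congruent subwedges, each acute); \emph{edge collars}, narrow strips along each edge between successive corner patches, filled by a ladder pattern compatible with the uniform subdivision; and an \emph{interior region} whose boundary consists of geodesic arcs, which is triangulated by adapting the Euclidean constructions of \cite{SA} or \cite{BZ} to $M_{\kappa_i}^{2}$. The uniform choice of $\epsilon$ ensures that the three regions can be matched along their common geodesic boundaries.

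The main obstacle is the treatment of corners with very small opening angles, where the fan triangles become extremely thin and must simultaneously align with the prescribed subdivision on the two incident edges. This is already the delicate step in the Euclidean literature, and the constant curvature setting inherits it wholesale. The additional curvature correction is absorbed by building a uniform \emph{acute buffer} into the Euclidean construction (requiring every angle to be at most $\pi/2 - \delta$ for some $\delta > 0$ that depends only on the finite polygon list) and then choosing $\epsilon$ small enough that the Gauss--Bonnet correction to any triangle angle is at most $\delta/2$. Acuteness is then preserved under curving, regardless of the sign of $\kappa_i$.
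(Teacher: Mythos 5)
Your reduction to finitely many local problems via an intrinsic, arc-length-based subdivision of the edges is sound, and it mirrors the mechanism the paper ultimately uses to restore compatibility across shared edges (the final merging step snaps edge vertices to canonical positions determined by the intrinsic metric of the edge, which works even when infinitely many polygons meet along it). However, the proposal has two genuine gaps, and they sit exactly where the paper expends most of its effort.

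First, the step ``extend the prescribed boundary subdivision of $\partial P_i$ to an acute triangulation of $P_i$'' is not a routine meshing exercise: already in the Euclidean plane, producing an acute triangulation that \emph{conforms to prescribed boundary vertices} is essentially the content of Bishop's theorem on uniformly acute triangulations of PSLGs, a deep result that the paper invokes as a black box. The constructions of Saraf and Burago--Zalgaller choose their own boundary points and do not conform to an arbitrary prescribed equally-spaced subdivision; a ladder pattern on an edge collar with equally spaced points on both sides produces right or near-right angles unless the rungs are offset, and the offsetting then conflicts with the corner fans. You acknowledge that the thin-corner case is ``delicate'' and ``inherited wholesale,'' but that is precisely the step that needs a proof, not a remark.

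Second, the Gauss--Bonnet argument only controls triangles of diameter at most $\epsilon$, while your interior region is a fixed-size piece of $P_i$. More fundamentally, even if every individual triangle of a Euclidean acute triangulation is nearly realizable in $M_{\kappa_i}^2$ with the same side lengths, these curved triangles do not fit together globally: the angle sums around interior vertices and the closing-up conditions along interior edges fail by accumulated curvature terms. One cannot ``curve'' a Euclidean triangulation triangle-by-triangle into a geodesic triangulation of a curved polygon. This is exactly the obstruction the paper circumvents by first shrinking all triangles via iterated medial subdivision, transporting the acute Euclidean triangulation back through genuine geodesic diffeomorphisms (gnomonic/Klein projections composed with affine maps), accepting that the result is only a dissection, and then repairing the edge mismatches using a quantitative separation bound (Bishop's Lemma) to show the repair perturbs angles by less than the acuteness margin. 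Without an argument playing the role of that pullback-and-repair step, the proposal does not close.
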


A \textit{constant curvature triangle complex} is a 2-dimensional cell complex where each cell is taken to be a \textit{geodesic triangle}. The meaning of ``geodesic triangle'' here is taken to be slightly more restrictive than that of a 3-sided polygon. A \textit{geodesic triangle} $T$ in the surface $M^2_\kappa$ of constant curvature $\kappa$ is defined as a triple of points of $M^2_\kappa$, together with a choice of three geodesic segments joining each pair of points. If $\kappa \leq 0$, $M^2_\kappa$ is uniquely geodesic and any triple of points defines a unique geodesic triangle. If $\kappa > 0$ however, there exists a unique geodesic between two points if and only if the distance between them is strictly less than $\frac{\pi}{\sqrt{\kappa}}$ (\cite{BH}). For our future constructions to be well-defined in the positive curvature case, we then require the three vertices of a triangle to lie in the same open hemisphere (the largest uniquely geodesic convex set in $M_\kappa^2$, see \cite{BH}). Equivalently, we could require the perimeter of our triangles to be strictly less than $\frac{2\pi}{\sqrt{\kappa}}$. In the positive curvature setting, we shall then understand the meaning of ``geodesic triangle'' to include these restrictions on the possible triples of points. 

We remark that a polygon in $M_\kappa^2$ can always be triangulated using only geodesic triangles, so that there is no loss of generality in restricting our attention to constant curvature triangle complexes. Indeed, the original proof of Dehn that every Euclidean polygon can be triangulated by diagonals carries to any non-Euclidean geometry which satisfies Hilbert's axioms of incidence and order (see \cite{Poincare} for Poincar\'e's account of Hilbert's axioms, and \cite{guggenheimer1} and \cite{guggenheimer2} for the original proof of Dehn and its adaptation in the axiomatic geometry setting). In the case where $\kappa<0$, the surface $M^2_\kappa$ already satisfies Hilbert's axioms of incidence and order and the result is immediate. In the $\kappa>0$ case, we need only restrict ourselves to an open hemisphere to satisfy all the axioms (only Hilbert's second incidence axiom is violated on the sphere). Since the polygonal Jordan curve bounding a polygon consists of finitely many geodesic segments, there exists a choice of great circle which does not prolong any of its geodesic sides. Such a great circle cuts the original polygon into finitely many polygons which are each contained in an open hemisphere and can therefore be triangulated by diagonals. Each of these triangulations uses only diagonals and thus their union yields a triangulation of the initial polygon using only geodesic triangles.


\section{Overview of the proof}
\label{overview}

Consider a constant curvature triangle complex $T$. Our proof follows three major steps, which we summarise here informally for the reader.

\begin{enumerate}
\item We introduce a preliminary elementary sequence of refining triangulations $T_0, T_1, T_2 \ldots $ of the triangle complex $T\eqqcolon T_0$, such that $T_{i+1}$ is a refinement of $T_i$. Previous work (\cite{Flo}) allows us to rely on two key properties of this sequence of refining triangulations:
\begin{enumerate}
    \item \textit{Triangles get arbitrarily small.} For all $\epsilon>0$, there exist a subdivision step $N$ such that all the triangles of $T_N$ lie inside a ball of radius $\epsilon$ (Lemma A).
    \item \textit{Triangle angles are uniformly bounded.} There exists $\delta>0$ such that, for all $n\in \mathbb N$, all the angles of $T_n$ lie in the interval $(\delta, \pi-\delta)$ (Lemma B).
\end{enumerate}
\item We fix an appropriately chosen subdivision step $N$ and associate to $T_N$ a Euclidean triangle complex $\overline{T_N}$, called its \textit{Euclidean comparison complex}. We then apply a new result of Chris Bishop (\cite{Bishop}) to produce an acute refining triangulation of this Euclidean complex which preserves a (possibly different) uniform angle bound on its triangle angles (Bishop's Theorem) and provides a uniform bound to separate interior vertices from the edges of the complex $\overline{T_N}$ it refines (Bishop's Lemma).
\item We construct a family of diffeomorphisms to pull back the previously obtained Euclidean acute triangulation of $\overline{T_N}$ onto the original constant curvature triangle complex $T$, working triangle by triangle in a way that ensures that all angles stay acute in the process (Lemma D). The resulting subdivision is an acute dissection of $T$ (Proposition \ref{dissection}), but not necessarily a refining triangulation.
\item We displace vertices lying in the interior of the edges of the triangles of $T_N$ to transform the previous dissection into a refining triangulation of $T$. We do so while guaranteeing that the displacement needed is small enough to preserve all the acute angles.
\end{enumerate}

\section{The Iterated Medial Subdivision}
\label{subdivision}
In this section, we concern ourselves with step (1) and define a natural and elementary inductive subdivision scheme of a constant curvature triangle complex. All the results and constructions used in this section are presented in previous work (see \cite{Flo}), which we recall here as needed. 

In \cite{Flo}, the \textit{iterated medial triangle subdivision} (see Figure \ref{fig:medial-subdivision}) of a constant curvature triangle complex $T$ is defined as the following sequence $T_0, T_1, T_2, \ldots$ of refining triangulations:
\begin{itemize}
    \item $T_0=T$
    \item $T_{n+1}$ is obtained from $T_n$ by adding the midpoints of the edges of $T_n$ and, within each triangle of $T_n$, pairwise connecting its 3 midpoints by geodesic segments  (this creates $4$ new sub-triangles for each triangle of $T_n$). 
\end{itemize}

\begin{figure}[H]\centering
  \includegraphics[page=1]{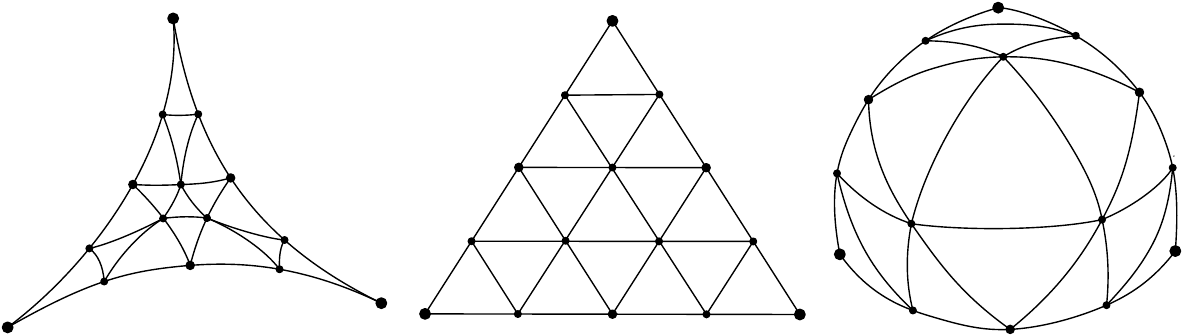}
  \caption{: The first two medial triangle subdivisions of a triangle in $\mathbb H^2$, $\mathbb E^2$ and $\mathbb S^2$.}
  \label{fig:medial-subdivision}
\end{figure}

The following two lemmas, which we shall need later on, are immediate consequence of Theorems A and C in (\cite{Flo}).

\begin{mainlemma}
\label{lemma:A}
For any constant curvature triangle complex $T$, and for all $\epsilon>0$, there exists $N\in \mathbb N$ such that, for all $n>N$, all the edge lengths of $T_n$ are smaller than $\epsilon$.
\end{mainlemma}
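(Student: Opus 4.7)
My plan is to track the maximum edge length $L_n$ of $T_n$ and show $L_n \to 0$. Since the triangle complex has finitely many isometry classes (inherited from the polygonal complex's finitely many polygons, after fixing a canonical triangulation of each), and the medial subdivision acts locally and identically on isometric triangles, it suffices to prove the statement for a single initial triangle in $M^2_\kappa$.

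First I would analyze one subdivision step. The four sub-triangles of a triangle with sides $a, b, c$ together have nine edges: three \emph{half-edges} of lengths $a/2, b/2, c/2$, and three \emph{midpoint segments} $m_a, m_b, m_c$ joining pairwise midpoints. In the Euclidean case, the classical midpoint theorem gives $m_a = a/2$ and similarly for the others, so the maximum edge length strictly halves at each step and the lemma is immediate. In non-zero constant curvature, explicit computation with the spherical and hyperbolic laws of cosines shows $m_a$ can exceed $a/2$ (e.g.\ in spherical geometry for a right-angled equilateral triangle of side $\pi/2$, one computes $m_a = \pi/3 > \pi/4$), so a naive diameter-halving argument fails. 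However, a Taylor expansion of $\cosh m_a$ or $\cos m_a$ about the Euclidean limit shows that the deviation $m_a - a/2$ is of cubic order in the triangle's diameter, so it becomes negligible whenever the triangle is small relative to the curvature scale.

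The key quantitative estimate I would then extract is: for any $\eta > 0$, there exists a threshold $\ell_0 = \ell_0(\kappa,\eta)$ such that any triangle in $M^2_\kappa$ of diameter at most $\ell_0$ produces sub-triangles each of diameter at most $(1/2+\eta)\ell_0$. Once some $T_N$ satisfies $L_N \le \ell_0$, iterating this estimate yields geometric decay $L_n \le (1/2+\eta)^{n-N}\ell_0 \to 0$, which proves the lemma.

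The main obstacle, and the step I would expect to carry the bulk of the work, is reaching the small-edge regime starting from potentially large initial triangles (only a genuine concern when $\kappa > 0$, since initial triangles can have diameter comparable to the sphere's radius). Here I would use a compactness argument over isometry classes: the single-step map sending the isometry class of a triangle to the ratio of maximum edge length before and after subdivision is continuous, and a direct analysis of the laws of cosines shows this ratio is strictly less than $1$ as long as the angles are bounded away from $0$ and $\pi$. Since the polygonal complex has finitely many initial shapes with non-degenerate angles, and since angular non-degeneracy is preserved along the iteration (the content of the companion Lemma B), compactness delivers a uniform multiplicative factor $c < 1$ with $L_{n+1} \le c L_n$ for all $n$, bringing $L_n$ below $\ell_0$ in finitely many steps. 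The full verification is carried out in the author's earlier paper \cite{Flo}, from which this lemma is stated as an immediate consequence.
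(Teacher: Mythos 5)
The paper does not give a proof of Lemma~A at all: it is stated as an immediate consequence of Theorems~A and C of the author's earlier paper \cite{Flo}, and you correctly close your proposal with exactly that citation. Your reconstruction of the underlying argument is a plausible account of what \cite{Flo} must establish and identifies the genuine difficulty (in spherical geometry a midpoint segment can exceed half the opposite side, as your $m_a=\pi/3$ example shows, while in the hyperbolic and Euclidean cases CAT$(0)$ convexity already gives $m_a\le a/2$ so the max edge halves outright), together with the right two-regime strategy (Taylor expansion near the Euclidean limit once edges are small, plus a compactness argument over shapes with angles bounded away from $0$ and $\pi$ via Lemma~B to reach that regime); since the present paper does not reproduce the proof from \cite{Flo}, I cannot check whether your sketch matches its exact route, but it is internally consistent and your final deferral to \cite{Flo} matches the paper's own treatment.
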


\begin{mainlemma}
\label{lemma:B}
For any constant curvature triangle complex $T$, there exists $\delta>0$ such that, for all $n\in\mathbb N$, all the angles of $T_n$ lie in the interval $(\delta, \pi -\delta)$.
\end{mainlemma}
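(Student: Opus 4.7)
The plan is to combine a compactness observation on the initial data with a Gauss--Bonnet style control on how far a single medial subdivision can move angles away from their parent value.

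First, since $C$ has finitely many polygon types (each lying in a fixed $M^{2}_{\kappa_i}$), the triangle complex $T_0 = T$ has only finitely many triangle isometry classes. In particular, all angles of $T_0$ lie in some interval $(\delta_0, \pi - \delta_0)$ with $\delta_0 > 0$, the initial triangle areas are uniformly bounded by some $A_{\max} < \infty$, and the curvatures $\kappa_i$ are uniformly bounded in absolute value by some $\kappa_{\max}$.

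The core of the proof is a single-step estimate: for a triangle $\Delta \subset M^{2}_\kappa$ of area $A(\Delta)$ with angles $\alpha, \beta, \gamma$, each of the four sub-triangles produced by its medial subdivision has angles that differ from the corresponding angle of $\Delta$ by at most $C\,|\kappa|\,A(\Delta)$, for some universal constant $C$. In the Euclidean case this deviation is exactly zero since the four sub-triangles are all similar to $\Delta$ scaled by $1/2$. In the general constant curvature case, one expands the $M^{2}_\kappa$ law of cosines applied to each sub-triangle (using that the midpoint of a geodesic of length $\ell$ divides it into two segments of length $\ell/2$, and that the geodesic joining two such midpoints has length close to half the opposite edge up to $O(|\kappa|\ell^3)$ corrections); the Gauss--Bonnet angle defect/excess $\kappa A(\Delta)$ of the parent supplies the natural small parameter governing this expansion.

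Iterating this single-step bound along any nested lineage $\Delta_0 \supset \Delta_1 \supset \ldots \supset \Delta_n$ with $\Delta_k \in T_k$, and using Lemma~A to conclude that $A(\Delta_k) \leq r^k A_{\max}$ for some $r < 1$ and all sufficiently large $k$, the telescoping estimate
\[
|\alpha_n - \alpha_0| \;\leq\; C\,\kappa_{\max} \sum_{k=0}^{n-1} A(\Delta_k) \;\leq\; C'\,\kappa_{\max}\,A_{\max}
\]
is uniform in $n$. Combining this tail bound with a trivial compactness argument for the finitely many early generations (where only finitely many triangle shapes appear), one produces a uniform $\delta > 0$ with all angles of $T_n$ confined to $(\delta, \pi - \delta)$ for every $n$. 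The main obstacle is producing a sharp and explicit constant $C$ for the single-step angle deviation uniformly across the hyperbolic and spherical regimes, and verifying it separately for the three ``corner'' sub-triangles and the central sub-triangle, which unlike in the Euclidean case are generally not congruent to one another. This quantitative content is presumably what Theorem~C of \cite{Flo} packages, making Lemma~B an immediate consequence.
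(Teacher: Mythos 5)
The paper does not actually prove this lemma: it is quoted as an immediate consequence of Theorems A and C of \cite{Flo}, so the entire quantitative content lives in that reference. Your sketch is therefore an attempt to reconstruct \cite{Flo} rather than the paper's argument, and while its architecture (a one-step angle perturbation bound, a summable accumulated error along each nested lineage, compactness for the finitely many early generations) is the natural one, the step that carries all the weight is asserted rather than proved. The claim that every angle of every child of the medial subdivision differs from the corresponding parent angle by at most $C\,|\kappa|\,A(\Delta)$ with a \emph{universal} constant $C$ is exactly the hard part, and it cannot hold uniformly over all triangle shapes: the angles of a geodesic triangle are not uniformly Lipschitz functions of its side lengths, the sensitivity degenerating as the triangle becomes thin (e.g.\ as $\sin\alpha \to 0$ in the law of sines or cosines). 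So your constant $C$ necessarily depends on a lower bound for the angles of $\Delta$ --- which is the very quantity Lemma~B is supposed to produce. This circularity can in principle be resolved by a bootstrap (as long as the accumulated deviation along a lineage stays below half of the angle bound of generation $M$, the shapes remain non-degenerate and the one-step constant remains controlled, so the deviation never exceeds that threshold), but you neither state nor carry out that induction, and without it the telescoping estimate is not justified.

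A second, smaller gap: you invoke Lemma~A to obtain $A(\Delta_k) \le r^k A_{\max}$. Lemma~A as stated is purely qualitative (for every $\epsilon$ there is an $N$); it provides no decay rate, hence no geometric bound on areas and no summability of $\sum_k A(\Delta_k)$. You would need the quantitative version of the edge-length decay (roughly, edges halve at each step up to curvature corrections), which is again part of what Theorem~A of \cite{Flo} actually supplies. In short, your proposal correctly identifies where the difficulty sits --- you even say so in your final sentence --- but it closes neither of the two estimates it relies on, so as a standalone proof it is incomplete; as a description of the paper's treatment it is moot, since the paper simply cites \cite{Flo}.
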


\section{Acute Triangulation of the Euclidean Comparison Complex}
\label{sec2}

We start by recalling that, given a metric space $\mathcal M_\kappa$ of constant curvature $\kappa$ and a geodesic triangle in $\mathcal M_\kappa$ with vertices $P,Q,R$, there exists a unique \textit{comparison triangle} in $\mathbb E^2$ with vertices $\overline{P},\overline{Q}, \overline{R}$ such that $d_{\,\mathbb E^2}(\overline{P}, \overline{Q}) = d_X(P,Q)$, $d_{\,\mathbb E^2}(\overline{P}, \overline{R}) = d_X(P,R)$ and $d_{\,\mathbb E^2}(\overline{Q}, \overline{R}) = d_X(Q,R)$ (\cite{BH}, 1.10 and 2.14). Given a constant curvature triangle complex $T$, the \textit{Euclidean comparison complex} of $T$, denoted by $\overline{T}$, is defined as the simplicial complex obtained by gluing together the Euclidean comparison triangles of each triangle of $T$ according to their respective gluings in $T$. 

We shall be looking more specifically at the sequence of comparison complexes $\overline{T_0},\overline{T_1}, \overline{T_2},\ldots $ associated with the iterated medial triangle subdivision $T_0, T_1, T_2, \ldots$ of $T$. Note that $\overline{T_n}$ is \textit{not} a refinement of $\overline{T_{n-1}}$ (see Figure \ref{fig:comparison-complex}). Instead, the Euclidean complexes $\overline{T_0},\overline{T_1}, \overline{T_2}\ldots $ should be thought of as increasingly accurate piecewise linear approximations of $T$.


\begin{figure}[H]\centering
  \includegraphics[page=2]{Acute-Triangulations.pdf}
  \caption{}
  \label{fig:comparison-complex}
\end{figure}

In \cite{Bishop}, Bishop proves the following property for Euclidean simplicial complexes:

\begin{namedtheorem}[Bishop's]
\label{bishop}
Let $\delta>0$ and $C$ be a Euclidean simplicial complex with angles lying in the interval $(\delta, \pi-\delta)$. There exists an acute refinement $B(C)$ of $C$ with angles in the interval $(\theta, \frac{\pi}{2}-\theta)$, where $\theta>0$ depends only on $\delta$. 
\end{namedtheorem}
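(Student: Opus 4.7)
The plan is to produce the refinement by a purely local, scale-invariant construction inside each triangle of $C$, arranged so that the subdivision of each shared edge is identical when viewed from either incident triangle, guaranteeing that the local constructions glue into a genuine refinement and that the resulting angle bounds depend only on $\delta$.

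First I would assign to every edge $e$ of $C$ an integer $n_e\ge 1$ depending only on the (at most four) triangle angles adjacent to the endpoints of $e$, and subdivide $e$ into $n_e$ equal subsegments. Since both incident triangles see the same endpoint angles, the two triangles subdivide $e$ identically, and the resulting edge pattern is consistent across $C$.

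Inside each triangle $T$ with vertex angles $\alpha_1,\alpha_2,\alpha_3\in(\delta,\pi-\delta)$, I would then triangulate using three ingredients: a \emph{vertex fan} near each corner (anchored at an interior apex placed on the bisector at a distance proportional to the local subdivision scale), a \emph{boundary strip} along the middle portion of each edge (a row of near-equilateral acute triangles built on consecutive subdivision points together with offset interior vertices at roughly $\tfrac{\sqrt{3}}{2}$ times the subdivision spacing), and an \emph{interior core} bounded by the inner sides of the fans and strips. The core has bounded combinatorial type and vertex angles uniformly bounded away from $0$ and $\pi$, so existing Euclidean acute triangulation techniques for bounded-angle polygons complete the construction inside it. Since every length in the construction scales with the edge lengths of $T$ itself, every angle produced depends only on the shape of $T$, and is therefore controlled purely by $\delta$.

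The main obstacle is uniform control at the two angular extremes of the fans. At a nearly flat vertex ($\alpha_i$ close to $\pi-\delta$), the fan triangles threaten to become obtuse near their base on the edges of $T$; at a nearly sharp vertex ($\alpha_i$ close to $\delta$), they threaten to become obtuse at the vertex itself. Both regimes must be ruled out uniformly in $\delta$, and the quantitative relation $\theta=\theta(\delta)$ will emerge from the trigonometric inequalities governing them; in particular $n_e$ must grow in a controlled way as the smallest incident angle approaches $\delta$. A final small symmetric perturbation of every fan apex and every offset interior vertex then upgrades non-obtuseness to strict acuteness and simultaneously yields the upper bound $\pi/2-\theta$ on every angle.
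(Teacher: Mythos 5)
This statement is not proved in the paper at all: it is quoted verbatim as an external result from Bishop's work on uniformly acute triangulations of PSLGs, and the paper uses it strictly as a black box. So your proposal is attempting to reprove a deep cited theorem, and as a proof it has genuine gaps. The decisive one is the ``interior core'' step: after you build the vertex fans and boundary strips, the core is a polygon whose boundary vertices are already fixed by those constructions, so you need an acute triangulation of a polygon that \emph{conforms to prescribed boundary points} and has all angles in $(\theta,\pi/2-\theta)$ with $\theta$ depending only on $\delta$. That is essentially the full strength of the theorem you are trying to prove; the ``existing Euclidean acute triangulation techniques for bounded-angle polygons'' you invoke (Bern--Mitchell--Ruppert-type nonobtuse meshing, Maehara/Yuan acute triangulations of polygons, etc.) either do not conform to a prescribed boundary subdivision or do not give a uniform gap below $\pi/2$, so this step is circular rather than a reduction.

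A second concrete failure is the closing move ``a small symmetric perturbation upgrades non-obtuseness to strict acuteness.'' This is false in general: in a nonobtuse triangulation any interior vertex of degree $4$ has its four angles summing to $2\pi$, hence all exactly $\pi/2$, and no perturbation preserving the combinatorics can make all of them acute; similar rigid right-angle configurations occur along edges. Obtaining strict acuteness with a uniform gap $\theta(\delta)$ is exactly the delicate quantitative content of Bishop's theorem, not an afterthought. (A smaller issue: in this paper's complexes an edge may be shared by arbitrarily many --- even infinitely many --- triangles, so ``the at most four triangle angles adjacent to the endpoints of $e$'' is not well posed as stated, though choosing $n_e$ as a function of the edge alone would repair the gluing consistency.) If your goal is to use the statement in the context of this paper, the intended route is simply to cite Bishop; if your goal is a self-contained proof, the fan/strip/core skeleton is a reasonable outline, but the core triangulation and the acuteness gap are the actual theorems to be proved and are missing here.
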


We first observe that Lemma $B$ extends to the comparison complexes of the iterated medial subdivisions of a constant curvature triangle complex with uniform angle bounds (Lemma 4.1). From there, Bishop's Theorem gives us acute triangulations of the comparison complexes of all the iterated medial subdivisions of $T$ with a uniform angle bound (Lemma C). 

\begin{lemma}
Consider a constant curvature triangle complex $T$ and a constant $\delta >0$ such that, for all $n\in\mathbb N$, all the angles of $T_n$ lie in $(\delta, \pi-\delta)$. Then all the angles of $\overline{T_n}$ also lie in the interval $(\delta, \pi - \delta)$.
\end{lemma}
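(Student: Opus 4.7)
My plan is to invoke the classical Alexandrov--Toponogov comparison between a geodesic triangle $\Delta$ in $M_\kappa^2$ with angles $\alpha, \beta, \gamma$ and its Euclidean comparison triangle $\overline{\Delta}$ with angles $\overline{\alpha}, \overline{\beta}, \overline{\gamma}$. This gives $\overline{\alpha} \geq \alpha$, $\overline{\beta} \geq \beta$, $\overline{\gamma} \geq \gamma$ when $\kappa \leq 0$, and the opposite inequalities when $\kappa \geq 0$, together with the exact Euclidean angle-sum identity $\overline{\alpha} + \overline{\beta} + \overline{\gamma} = \pi$. Since $\overline{T_n}$ is assembled from the Euclidean comparison triangles of the triangles of $T_n$, it suffices to check the interval bound triangle by triangle.

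The non-positive curvature case is essentially immediate. The comparison inequality directly yields the lower bound $\overline{\alpha} \geq \alpha > \delta$, and combining $\overline{\beta} \geq \beta > \delta$, $\overline{\gamma} \geq \gamma > \delta$ with the Euclidean angle-sum identity gives the upper bound
\[
\overline{\alpha} \;=\; \pi - \overline{\beta} - \overline{\gamma} \;<\; \pi - 2\delta \;<\; \pi - \delta.
\]
The case $\kappa = 0$ is trivial since comparison triangles are isometric to the originals.

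The positive curvature case is the main obstacle. Here the comparison inequality immediately delivers the upper bound $\overline{\alpha} \leq \alpha < \pi - \delta$, but the lower bound is subtler: the naive angle-sum argument gives only $\overline{\alpha} > 2\delta - \pi$, which is vacuous whenever $\delta < \pi/2$. To obtain a genuine lower bound, I would exploit the fact that the spherical triangles of $T_n$ lie in an open hemisphere, so their angular excess $(\alpha + \beta + \gamma) - \pi$, and hence each individual angle defect $\alpha - \overline{\alpha} \geq 0$, is uniformly bounded. A continuity or compactness argument on the space of spherical triangles with angles in $[\delta, \pi - \delta]$ inscribed in an open hemisphere then produces a uniform $\delta' > 0$, a priori strictly smaller than $\delta$, such that $\overline{\alpha} > \delta'$. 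My expectation is therefore that the lemma as stated should be read in this slightly weaker form, with a possibly smaller uniform lower bound in the spherical case; any such uniform positive angle bound is what the subsequent invocation of Bishop's Theorem actually requires.
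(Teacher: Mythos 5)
Your treatment of the hyperbolic and flat cases coincides with the paper's: comparison angles only increase, so the lower bound $\delta$ is inherited and the Euclidean angle sum gives the upper bound $\pi-2\delta$. Where you diverge is the spherical case, and your suspicion is justified: the paper's own proof commits exactly the error you warn against. It asserts that the upper bound $\pi-\delta$ on the comparison angles ``translates to a lower bound of $2\delta$'' via the angle sum, whereas the angle sum only yields $\overline{\alpha}>\pi-2(\pi-\delta)=2\delta-\pi$, which is vacuous whenever $\delta<\pi/2$. In fact the lemma as literally stated fails for $\kappa>0$: take the isosceles spherical triangle with apex angle $\alpha=0.2$ and legs of length $\ell=2$; its base has length $a=\arccos(\cos^2 2+\sin^2 2\,\cos 0.2)\approx 0.18$, so the Euclidean comparison apex angle is $2\arcsin(a/(2\ell))\approx 0.09$, while the spherical base angles are essentially $\pi/2$; with $\delta=0.1$ all original angles lie in $(\delta,\pi-\delta)$ but the comparison apex angle does not (heuristically $\overline{\alpha}\approx\alpha\sin\ell/\ell$, an arbitrarily small fraction of $\alpha$ as $\ell\to\pi$). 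Your proposed repair --- a uniform but possibly smaller $\delta'>0$ obtained by compactness --- is sound, provided you split off the degenerate end of the moduli space: for triangles of diameter below some $\epsilon_0$ the comparison angles are within $\delta/2$ of the originals, and on the remaining set of side-length triples with angles in $[\delta,\pi-\delta]$ (which is compact, and on which the comparison triangle is nondegenerate, since $c=a+b$ or perimeter $2\pi$ would force angles $0$ or $\pi$) the minimal comparison angle is a positive continuous function, hence uniformly bounded below. This weakened conclusion is indeed all that Lemma C and Bishop's Theorem require; an even cheaper fix, matching how the lemma is actually used downstream, is to assert the bound only for $n$ large enough that Lemma A makes every triangle of $T_n$ small, in which case the comparison angles differ from the originals by less than $\delta/2$.
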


\begin{proof}
In the hyperbolic case (resp. the spherical case), the Euclidean comparison angles are strictly larger (resp. smaller) than the original angles (\cite{BH}). We thus obtain a lower bound of $\delta$ (resp. an upper bound of $\pi-\delta$) through Lemma B. Since angles must add up to $\pi$ in the Euclidean comparison triangles, this also translates to an upper bound of $\pi-2\delta$ (resp. a lower bound of $2\delta$).
\end{proof}

\begin{mainlemma}
For any constant curvature triangle complex $T$, there exists $\Delta>0$ such that, for all $n\in\mathbb N$,  $B(\overline{T_n})$ is an acute triangulation of $\overline{T_n}$ with all triangle angles lying in the interval $(\Delta, \frac{\pi}{2}-\Delta)$. 
\end{mainlemma}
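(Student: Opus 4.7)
The plan is essentially a direct chaining of the preceding results: Lemma B provides a uniform angle bound on the triangles of $T_n$ across all $n$, Lemma 4.1 transfers that bound to the Euclidean comparison complex $\overline{T_n}$, and Bishop's Theorem then produces the required acute refinement with a uniform constant $\Delta$. The crucial observation to stress is that the angle bound $\theta$ output by Bishop's Theorem depends \emph{only} on the input angle bound $\delta$ and not on the size or combinatorics of the complex, so applying Bishop's Theorem to each $\overline{T_n}$ individually gives us back the \emph{same} $\theta$ every time.

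Concretely, I would first invoke Lemma B on $T$ to obtain a single constant $\delta>0$ such that every angle in every $T_n$ lies in $(\delta,\pi-\delta)$. Then Lemma 4.1 applies with the same $\delta$ to conclude that every angle in every $\overline{T_n}$ lies in $(\delta,\pi-\delta)$ as well (both the hyperbolic case, where comparison angles shrink, and the spherical case, where they grow, are handled uniformly by Lemma 4.1). At this point each $\overline{T_n}$ is a Euclidean simplicial complex satisfying the hypothesis of Bishop's Theorem with the \emph{same} $\delta$.

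Next, I would apply Bishop's Theorem to each $\overline{T_n}$ separately, obtaining an acute refinement $B(\overline{T_n})$ whose triangle angles lie in $(\theta,\tfrac{\pi}{2}-\theta)$, where $\theta=\theta(\delta)>0$ is the constant furnished by Bishop. Since $\delta$ is independent of $n$, so is $\theta$, and we may set $\Delta \coloneqq \theta$ to conclude that every $B(\overline{T_n})$ has all angles in $(\Delta,\tfrac{\pi}{2}-\Delta)$.

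There is no real obstacle here; the content of the lemma is just the bookkeeping observation that the constants propagate uniformly across the sequence. The only subtlety worth flagging is the need to verify that the dependence in Bishop's Theorem really is $\theta=\theta(\delta)$ alone, with no hidden dependence on the size, diameter, or finiteness of the complex, so that applying it complex-by-complex for every $n$ yields a single $\Delta$. This is exactly what Bishop's statement (as recalled above) asserts, so the proof reduces to a clean citation.
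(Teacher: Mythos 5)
Your proposal is correct and matches the paper's own proof essentially line for line: invoke Lemma B for a uniform $\delta$, transfer it to $\overline{T_n}$ via Lemma 4.1, and apply Bishop's Theorem to each $\overline{T_n}$, observing that the resulting constant depends only on $\delta$ and hence is uniform in $n$.
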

\begin{proof}
Lemma B guarantees the existence of $\delta>0$ such that, for all $n\in\mathbb N$, all the angles of $T_n$ lie in $(\delta, \pi-\delta)$. Lemma 4.1 then transfers this bound to $\overline{T_n}$, for all $n\in\mathbb N$. Finally,  for all $n\in\mathbb N$, Bishop's Theorem guarantees that there exists $\Delta_n>0$ such that $B(\overline{T_n})$ is an acute refinement of $\overline{T_n}$ with all triangle angles lying in the interval $(\Delta_n, \frac{\pi}{2}-\Delta_n)$, where the various values of $\Delta_n$ can be taken to be identical since $\Delta_n$ depends only on the $\delta$ obtained from Lemma B.
\end{proof}

\section{From Euclidean to Spherical and Hyperbolic}
\label{sec3}

All metrics of constant curvature have the advantage of being locally projectively equivalent, i.e. there exist local diffeomorphisms between any two small enough neighbourhoods of any two spaces of constant curvature which send local geodesics to local geodesics (this was known already to Lagrange in positive curvature \cite{Lagrange}, and generalised to negative curvature by Beltrami \cite{Beltrami1},\cite{Beltrami2}). Such local diffeomorphisms are called \textit{geodesic maps}. 

In both the spherical case and the hyperbolic case, the geodesic maps providing the local projective equivalence (or a global one in the hyperbolic case) are given by well chosen projections in $\mathbb R^3$ of either the open half-sphere or of a single sheet of the standard two-sheeted hyperboloid. We briefly recall the definitions of these diffeomorphisms, called respectively the \textit{gnomonic projection} in the spherical setting and the \textit{Klein model} in the hyperbolic setting. 

\begin{figure}[H]
	\begin{center}     
    	\includegraphics[width=\textwidth]{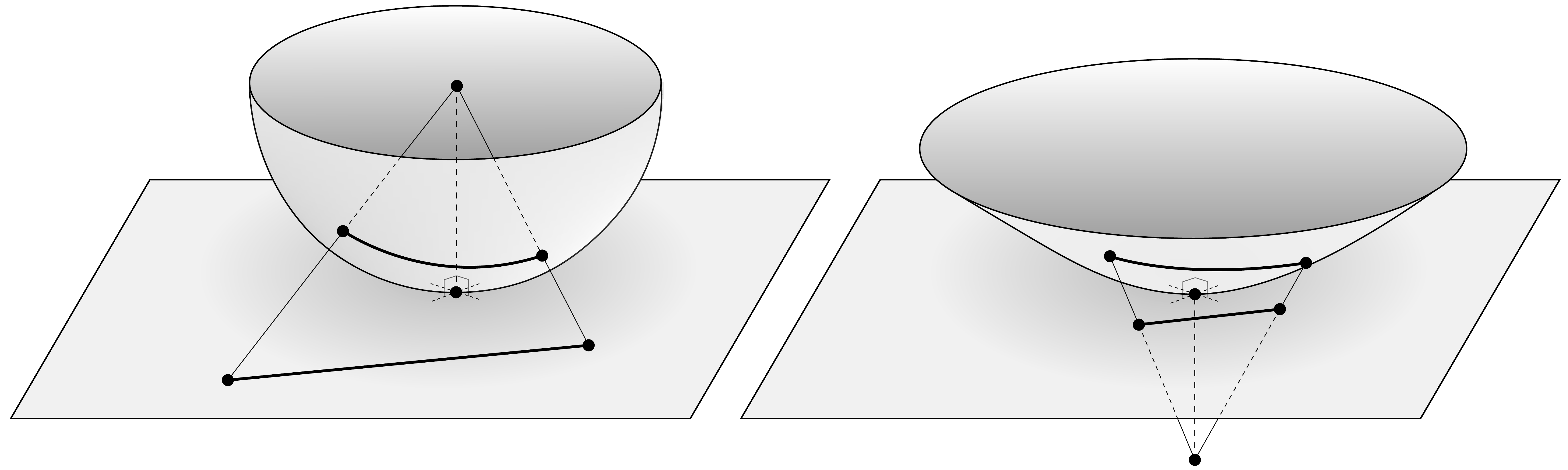}
        	\caption{}
        	\label{fig:gnomonic}
	\end{center}
\end{figure}

Consider the unit sphere in $\mathbb R^3$ and a choice of great equatorial circle. The gnomonic projection is defined as the radial projection of one of the associated open hemisphere from the centre of the sphere in $\mathbb R^3$ onto the plane tangent to the pole contained in the chosen hemisphere. Likewise, the Klein model map is obtained by projecting a single sheet of the standard two-sheeted hyperboloid from its centre onto the plane tangent to the apex of the chosen single sheet (see Fig. \ref{fig:gnomonic}). Since the gnomonic map and the Klein model map play symmetrical roles in positive and negative constant curvature, we will use the same letter $\pi$ to denote either one and rely on the spherical or hyperbolic context to clarify the distinction between them. For the same reason, we will denote by $P$ both the selected spherical pole used in the gnomonic projection and the apex of the chosen hyperboloid sheet used in the Klein model. We denote by $B(P,\epsilon)$ the ball of radius $\epsilon$ centred at $P$ in both the open hemisphere and the hyperboloid sheet. Note that this construction is readily adapted for a non-unit curvature $\kappa$ by adjusting the curvature of the sphere (resp. hyperboloid sheet) under consideration.

Given a geodesic triangle $t$ in $\mathcal M^{2}_\kappa$, our aim for this section is to construct a diffeomorphism $\phi_t$ between $t$ and its Euclidean comparison triangle $\overline{t}$ in such a way that we may guarantee the following lemma:

\begin{mainlemma}
For all $\iota>0$, there exists $\epsilon>0$ such that, for any geodesic triangle $t$ contained in a ball of radius $\epsilon$, there exists a diffeomorphism $\phi_t: t\rightarrow \overline{t}$ mapping geodesics to geodesics and such that, for any 3 points $A,B,C\in t$, we have $\angle \phi(A)\phi(B)\phi(C)\in (\angle ABC -\iota, \angle ABC + \iota)$. 
\end{mainlemma}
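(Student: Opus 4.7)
The plan is to construct $\phi_t$ as a composition $F \circ \pi$, where $\pi$ is the gnomonic projection (in the spherical case) or the Klein model map (in the hyperbolic case) introduced above, and $F$ is a carefully chosen affine map of $\mathbb{E}^2$. By homogeneity of $\mathcal M^2_\kappa$, I first apply an ambient isometry to arrange that the ball $B(P,\epsilon)$ containing $t$ is centred at the pole $P$ of the projection $\pi$. Since $\pi$ carries geodesic segments of $\mathcal M^2_\kappa$ to Euclidean straight segments on the tangent plane at $P$, the image $\pi(t)$ is a genuine Euclidean triangle, with vertices $\pi(v_i)$ obtained from the vertices $v_i$ of $t$. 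I let $F$ be the unique affine map of $\mathbb{E}^2$ sending each $\pi(v_i)$ to the corresponding vertex of $\overline{t}$, and set $\phi_t := F \circ \pi$. Then $\phi_t \colon t \to \overline{t}$ is a diffeomorphism sending geodesics to geodesics, since $\pi$ does and affine maps preserve straight lines.

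For the angle estimate, the key fact is that in gnomonic or Klein coordinates the pullback metric agrees with the Euclidean metric to first order at $P$. Explicitly, a radial distance $r$ in $\mathcal M^2_\kappa$ from $P$ maps to Euclidean distance $r + O(|\kappa|\, r^3)$ along the corresponding ray in the tangent plane, so the differential $D\pi_x$ differs from the identity (under the natural identification of $T_P \mathcal M^2_\kappa$ with $\mathbb{R}^2$) by $O(\epsilon^2)$, uniformly for $x\in B(P,\epsilon)$. It follows that $\pi$ is $(1 + O(\epsilon^2))$-bi-Lipschitz on $t$ and distorts angles by $O(\epsilon^2)$. In particular, the Euclidean edge lengths of $\pi(t)$ differ from the geodesic edge lengths of $t$, which equal the edge lengths of $\overline{t}$, by a multiplicative factor $1 + O(\epsilon^2)$.

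Next, I argue that the affine map $F$ between the two Euclidean triangles $\pi(t)$ and $\overline{t}$, whose corresponding edge lengths are so close, has linear part $L$ close to an orthogonal transformation. This is a quantitative SSS-stability statement: triangles with identical edge lengths are rigidly congruent, and by a compactness argument on the space of triangles whose angles are bounded away from $0$ and $\pi$, an $O(\epsilon^2)$ relative perturbation of edge lengths yields an $F$ whose linear part is within $O(\epsilon^2)$ of an isometry. Combining the two estimates, $D\phi_t = L\cdot D\pi$ is $O(\epsilon^2)$-close to orthogonal at every point of $t$, so $\phi_t$ distorts all angles by $O(\epsilon^2)$. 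For any triple $A,B,C\in t$, since $\phi_t$ sends the geodesic $BA$ to the straight segment $\phi(B)\phi(A)$ (and similarly for $BC$), the angle $\angle \phi(A)\phi(B)\phi(C)$ is the Euclidean angle between the pushforwards of the unit tangent vectors at $B$ pointing along $BA$ and $BC$, so $|\angle \phi(A)\phi(B)\phi(C) - \angle ABC| = O(\epsilon^2) < \iota$ for $\epsilon$ chosen sufficiently small.

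The main obstacle I anticipate is the quantitative SSS-stability invoked in the third paragraph: the constant hidden in the $O(\epsilon^2)$ estimate degenerates as $t$ approaches degeneracy, with some angle tending to $0$ or $\pi$, so the argument only yields uniform angle control over triangles with a uniform non-degeneracy bound. This is not a problem for the global proof, since Lemma D will be applied to triangles of $T_N$, whose angles are all in a fixed interval $(\delta,\pi-\delta)$ by Lemma B; but a uniform non-degeneracy assumption on $t$ appears to be an implicit hypothesis of the statement as currently phrased.
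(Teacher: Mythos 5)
Your proposal follows essentially the same route as the paper: the paper also writes $\phi_t = \psi_t \circ \pi \circ \tau_t$ (translation to the pole, gnomonic/Klein projection, then the unique affine map onto $\overline{t}$), proves that $\pi$ barely distorts angles near $P$ via $d\pi_P = \mathrm{Id}$, and deduces from the metric-tensor ratio tending to $1$ that $\psi_t$ is close to an isometry. The non-degeneracy caveat you flag for the SSS-stability step is a fair observation — the paper's Claim~3 silently relies on the same uniform angle bound, which is supplied by Lemma~B in the application — but it does not affect correctness where the lemma is used.
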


\begin{proof}
In both the spherical and the hyperbolic cases, we begin by translating $t$ so that the circumcenter of $t$ coincides with $P$. There is a unique translation $\tau_t$ that achieves this and it does not modify angles. The following claim guarantees that the gnomonic projection map (resp. the Klein model map) do not affect the angles much for triangles that stay close to the pole (resp. apex).

\textbf{Claim 1.} For all $\delta>0$, there exists $\epsilon >0$, such that, for any 3 points $A,B,C\in B(P,\epsilon)$, $\angle \pi(A)\pi(B)\pi(C)\in (\angle ABC -\delta, \angle ABC + \delta)$. 

\begin{proof}[Proof of Claim 1.]
Let us fix $\delta>0$ and denote by $f$ and $g$ the two unique geodesics passing respectively through $A$ and $B$, and $B$ and $C$, with tangent vectors $u$ and $v$ at $B$ (for any small enough $\epsilon$, we can assume the geodesics to be unique). Note that $d\pi_P=\text{Id}_{\mathbb R^2}$ (identifying the tangent plane at $P$ with $\mathbb R^2$), and we can thus choose a small enough $\epsilon$ such that, if $A,B,C$ are $\epsilon$-close to $P$,  the angle between $d\pi(u)$ and $d\pi(v)$ differs less than $\delta$ from the angle between $u$ and $v$ (i.e. $\angle pqr$). The fact that $\pi$ sends geodesics to geodesics guarantees that the angle between $d\pi(u)$ and $d\pi(v)$ is exactly $\angle \pi(p)\pi(q)\pi(r)$. 
\end{proof}

To build our desired diffeomorphism $\phi_t$, we compose $\pi\circ \tau_t$ by the unique affine map $\psi_t$ taking $\pi(t)$ to $\overline{t}$, i.e. we write $\phi_t$ to denote $\psi_t\circ\pi\circ \tau_t$. Note first that both translations and affine maps are geodesics maps, and since $\pi$ is geodesic as well, so is $\phi_t$. For two points $p,q\in \mathbb E^2$, let us denote by $|pq|$ the Euclidean distance between them. The following claim guarantees that, for a small enough triangle $t$, the image of $t$ under $\pi$ is almost $\overline{t}$ (and thus $\psi$ does not distort angles too much either). 

\textbf{Claim 2.} Given points $p,q\in B(P,\epsilon)$, the ratio $d_{M^2_\kappa}(p,q)/|\pi(p) \pi(q)|$ approaches 1 as $\epsilon$ becomes smaller (from above if $\kappa<0$, from below if $\kappa>0$).

\begin{proof}[Proof of Claim 2.]
Recall that, for any $\kappa\neq 0$, the metric tensor for the associated Klein model or gnomonic projection at the point $X$ with Euclidean coordinates $(x_1, x_2)$ is given by (\cite{HG}):

\[
\boldsymbol{g}_{ij}(X) = \frac{\delta_{ij}}{1+\kappa(x_1^2+x_2^2)} + \frac{x_1 x_2}{(1+\kappa(x_1^2+x_2^2))^2}
\]

Let us denote by $\boldsymbol{g}^{\mathbb E^2}_{ij}(X)$ the standard Euclidean metric tensor at $X$. Then, for a point $X\in B(P,\epsilon)$, and when $\kappa<0$, we obtain:

\[
1 \leq \frac{\boldsymbol{g}_{ij}(X)}{\boldsymbol{g}^{\mathbb E^2}_{ij}(X)} \leq \frac{1}{1+\kappa\epsilon^2}+\frac{\epsilon^2}{(1+\kappa\epsilon^2)^2} \underset{\epsilon \rightarrow 0} \longrightarrow 1
\]

Likewise, when $\kappa >0$, we obtain:

\[
1 \underset{0\leftarrow \epsilon}\longleftarrow \frac{1}{1+\kappa\epsilon^2}+\frac{\epsilon^2}{(1+\kappa\epsilon^2)^2} \leq \frac{\boldsymbol{g}_{ij}(X)}{\boldsymbol{g}^{\mathbb E^2}_{ij}(X)} \leq 1
\]

\end{proof}

This claim implies that $\psi_t$ can be chosen arbitrarily close to a translation and thus proves the following claim:

\textbf{Claim 3.}  For all $\delta>0$, there exists $\epsilon >0$, such that, for any 3 points $A,B,C\in \pi(t)$, $\angle \psi_t(A)\psi_t(B)\psi_t(C)\in (\angle ABC -\delta, \angle ABC + \delta)$. 

Using Claim 1 and Claim 3 with $\delta = \frac{\iota}{2}$ finishes the proof of the lemma.

\end{proof}

\section{Pulling Back the Euclidean Acute Triangulation}
\label{sec4}

Consider a constant curvature triangle complex $T$ and a family  $\Phi_T=\{\phi_t:t\rightarrow \overline{t}\>|\> t\in T\}$ of geodesic diffeomorphisms between each triangle of $T$ and their Euclidean comparison triangles. Given a refining triangulation $\overline{T}'$ of $\overline{T}$, we obtain a dissection of $T$ by pulling back, for each triangle $t$ in $T$, the restriction of $\overline{T}'$ to $\overline{t}$ using $\phi^{-1}_t$. We shall denote this dissection by $\Phi_T^{-1}(\overline{T}')$ (see Fig. \ref{fig:heuristics}). Together with Lemma C and Lemma D, this process guarantees the existence of an acute dissection of any constant curvature triangle complex $T$ with finitely many isometry types of triangles:
\begin{prop}
\label{dissection}
For any constant curvature triangle complex $T$, there exist $\delta>0$ and $N\in\mathbb N$ such that, for all $n>N$, $\Phi^{-1}_{T_n}(B(\overline{T_n}))$ is an acute dissection of $T$ with angles lying in the interval $(\delta, \frac{\pi}{2}-\delta)$.
\end{prop}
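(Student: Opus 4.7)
The plan is to bolt Lemmas A, C and D together in the natural order, being careful about the order in which the constants are chosen.

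First, I would apply Lemma C to fix a constant $\Delta>0$, independent of $n$, such that every angle of $B(\overline{T_n})$ lies in $(\Delta,\frac{\pi}{2}-\Delta)$. Next, feed this into Lemma D with distortion parameter $\iota=\Delta/2$ to obtain a radius $\epsilon>0$ such that every geodesic triangle contained in a ball of radius $\epsilon$ admits a geodesic-preserving diffeomorphism to its Euclidean comparison triangle that distorts angles by less than $\Delta/2$. Finally, use Lemma A to produce $N\in\mathbb N$ such that for all $n>N$ every edge of $T_n$ has length below $\epsilon$; since a geodesic triangle in our setting has diameter equal to its longest side (using convexity of the ball of radius $\epsilon$ in the positive curvature case), each triangle $t$ of $T_n$ is then contained in a ball of radius $\epsilon$ centred at any of its vertices, which makes Lemma D applicable and furnishes the diffeomorphism $\phi_t:t\to\overline{t}$.

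For $n>N$, I then assemble the family $\Phi_{T_n}=\{\phi_t:t\in T_n\}$ and form the pull-back $\Phi_{T_n}^{-1}(B(\overline{T_n}))$. Because each $\phi_t$ is a geodesic map, the pre-image under $\phi_t$ of every triangle of $B(\overline{T_n})\cap\overline{t}$ is a genuine geodesic triangle in $t$, so this is an honest subdivision of $T$ into geodesic triangles. At every vertex of this subdivision the angle equals, up to Lemma D, a corresponding angle of $B(\overline{T_n})$, hence differs from it by less than $\Delta/2$. Since those original angles lie in $(\Delta,\frac{\pi}{2}-\Delta)$, every angle of $\Phi_{T_n}^{-1}(B(\overline{T_n}))$ lies inside $(\Delta/2,\frac{\pi}{2}-\Delta/2)$, and taking $\delta=\Delta/2$ yields the proposition.

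I expect no real obstacle: the argument is essentially bookkeeping of constants, the key thing being that Lemma C gives a $\Delta$ that is uniform in $n$, so a single $\epsilon$ from Lemma D suffices for all iterates. The one conceptual subtlety (which is why the statement asserts only a \emph{dissection} and not a refining triangulation) is that the diffeomorphisms $\phi_t$ are chosen independently on each triangle: two triangles $t,t'$ of $T_n$ sharing an edge $e$ will in general pull the Bishop vertices sitting in the interior of $\overline{e}$ back to two different subsets of $e$, so the subdivision fails to match along shared edges. Repairing these mismatches without destroying the acute angle bound is the job of the next step of the paper, and my plan here does not address it.
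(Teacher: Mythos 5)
Your proposal is correct and follows essentially the same route as the paper: choose $\Delta$ via Lemma C, feed $\iota=\Delta/2$ into Lemma D to get $\epsilon$, invoke Lemma A to get $N$, and conclude with $\delta=\Delta/2$. The paper's proof is exactly this bookkeeping, and your closing remark about the edge mismatches correctly identifies why the statement only claims a dissection.
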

\begin{figure}[H]\centering
  \includegraphics[page=3]{Acute-Triangulations.pdf}
  \caption{}
  \label{fig:heuristics}
\end{figure}
\begin{proof}
Choose $\Delta$ as in Lemma C and let us pick $\delta=\frac{\Delta}{2}$. Using Lemma D, we can select $\epsilon>0$ such that, for any geodesic triangle $t$ lying inside a ball of radius $\epsilon$, there is a diffeomorphism $\phi_t: t\rightarrow \overline{t}$ which perturbs the angle between any three points of $t$ by strictly less than $\frac{\Delta}{2}$. By Lemma $A$, there exists a subdivision step $N\in\mathbb N$ such that, for all $n>N$, all edges of $T_n$ are less than $\epsilon$, so that every triangle of $T_n$ lies in a ball of radius $\epsilon$. Fixing $n>N$, let us denote by $\Phi_{T_n}$ the collection of all such geodesic diffeomorphisms $\phi_t$, for each triangle $t$ in $T_n$.  By construction, $\Phi^{-1}_{T_n}(B(\overline{T_n}))$ is a dissection of $T$. Lemma D guarantees that all angles of $\Phi^{-1}_{T_n}(B(\overline{T_n}))$ lie in the interval $(\delta, \frac{\pi}{2}-\delta)$. 
\end{proof}

\textbf{Remark.} In the previous proof, for any given triangle $t$, $\phi_t$ is not an isometry on the sides of $t$. Because of this, given a point $\overline{S}$ of $B(\overline{T_N})$ which lies in the interior of an edge $\overline{e}$ of $\overline{T_N}$ shared by two triangles $\overline{t}$ and $\overline{t'}$, it is likely that $\phi_t^{-1}(S)\eqqcolon S_t$ and $\phi_{t'}^{-1}(S)\eqqcolon S_{t'}$ are two distinct points on the edge $e\coloneqq \phi_t^{-1}(\overline{e})=\phi_{t'}^{-1}(\overline{e})$ of $T_N$ (see Fig. \ref{fig:heuristics}). Therefore we cannot immediately guarantee that $\Phi^{-1}_{T_N}(B(\overline{T_N}))$ is more than just a dissection. 


We begin by providing the reader with a brief and informal summary of how we intend to resolve this problem. Our approach is to ``snap back together'' non-matching edge points from different adjacent triangles to resolve the dissection into a triangulation. In order to do so, we combine two main ingredients. First, an additional lemma of Bishop (see \cite{Bishop}) gives us a uniform bound for separating the interior points of our refining acute triangulations away from the original triangle edges of the complex it refines (Bishop's Lemma). We will show that this uniform separation bound can be brought back to the non-Euclidean complex (Lemma E). Second, claim 2 of Lemma D guarantees that for a large enough subdivision level $N$, each geodesic diffeomorphism $\phi_t^{-1}$ can be chosen arbitrarily close to an isometry. Therefore, non-matching associated edges points, each coming from the pullbacks of the restriction of the acute triangulation $B(\overline{T_N})$ to adjacent triangles of $T_N$, can be chosen to be arbitrarily close to each other (relative to the edge lengths of the triangle). We can then choose a large enough $N$ to guarantee that snapping back together corresponding edge vertices from two different triangles sharing an edge yields an arbitrarily small perturbation with respect to the uniform angle bound of $\Phi^{-1}_{T_N}(B(\overline{T_N}))$.

We begin by recalling the following lemma, proved by Chris Bishop in \cite{Bishop}:

\begin{namedlemma}[Bishop's]
Let $\delta>0$ and $C$ be a Euclidean simplicial complex with angles lying in the interval $(\delta, \pi-\delta)$. There exists a constant $\theta$, depending only on $\delta$, such that the acute refinement $B(C)$ of $C$ provided by Bishop's Theorem satisfies the following additional property $(\star)$: for each triangle $t$ in $C$, and each vertex $X$ of the restriction of $B(C)$ to the interior of $t$, all 6 angles formed by $X$ and the edges of $t$ are strictly greater than $\theta$.
\end{namedlemma}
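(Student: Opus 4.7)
The statement is a technical addendum to Bishop's Theorem drawn from \cite{Bishop}, so my plan is to verify that Bishop's explicit refinement procedure produces the claimed angular separation, rather than to give an independent proof. The argument will leverage both the angle bound already supplied by Bishop's Theorem and a structural property of the construction itself.

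The starting point is that Bishop's Theorem provides a constant $\theta_0 = \theta_0(\delta) > 0$ such that every triangle of $B(C)$ has all its angles in $(\theta_0, \pi/2 - \theta_0)$. Fix a triangle $t$ of $C$, one of its vertices $A$, and let $e = AB$ be one of the two edges of $t$ incident to $A$. The triangles of $B(C)|_t$ incident to $A$ form a fan around $A$ whose angles sum to the interior angle of $t$ at $A$, and each of these angles is at least $\theta_0$. In particular, the first triangle $T_1 = AP_1Q_1$ of the fan, with $P_1 \in e$, satisfies $\angle Q_1 A B \geq \theta_0$. Consequently, for any interior vertex $X$ of $B(C)|_t$ whose ray $AX$ does not enter the interior of $T_1$, we immediately obtain $\angle XAB \geq \theta_0$.

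The main obstacle is the remaining case, in which an interior vertex $X$ of $B(C)|_t$ has its visibility ray from $A$ passing through the interior of $T_1$: in that case $\angle XAB$ could a priori be arbitrarily small. Ruling this out is where the specific construction in \cite{Bishop} enters. The plan is to extract from that construction the structural fact that $T_1$ is a leaf of the refinement process, meaning $B(C)$ creates no new vertices in the interior of $T_1$. Informally, Bishop's construction first produces a controlled boundary layer of acute triangles hugging $\partial t$ (with dedicated subdivision patterns near each vertex), and only then refines the remaining core in the interior; the boundary-layer triangles are themselves terminal pieces of the refinement.

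Once the leaf property of the boundary layer is established, every interior vertex $X$ of $B(C)|_t$ must lie on the far side of $T_1$ from $A$, so the ray $AX$ exits $T_1$ through the edge $P_1 Q_1$ and therefore $\angle XAB \geq \theta_0$. Applying this argument at both endpoints of each of the three edges of $t$ yields all six angular bounds with a single constant $\theta$ depending only on $\delta$. The technical heart of the proof is thus the verification of the leaf property within Bishop's construction; the transfer from that property to the six-angle statement is the direct fan argument sketched above.
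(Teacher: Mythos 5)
First, a point of reference: the paper does not prove this lemma at all --- it is explicitly ``recalled'' from \cite{Bishop}, so there is no internal proof to compare against. Your proposal is honest about deferring the ``technical heart'' (the leaf property of the boundary layer) to Bishop's construction, which is in the same spirit as the paper's citation. The problem is that the reduction you build on top of that deferred fact does not work.

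The gap is in the final step: ``the ray $AX$ exits $T_1$ through the edge $P_1Q_1$ and therefore $\angle XAB \geq \theta_0$.'' This is a non sequitur. Knowing that $X$ lies beyond the opposite side $P_1Q_1$ of the first fan triangle at $A$ places no lower bound on $\angle XAB$: take $A=(0,0)$, $B=(1,0)$, $P_1=(0.1,0)$, $Q_1=(0.05,0.05)$ and $X=(0.9,\,0.001)$. The ray $AX$ enters the interior of $T_1$, exits through $P_1Q_1$ (very near $P_1$), and reaches an $X$ on the far side of that edge, yet $\angle XAB\approx 10^{-3}$ is arbitrarily small. More conceptually, the lemma is a separation statement along the \emph{entire} edge $e$ of $t$ (this is exactly what Figure \ref{fig:separation} depicts: interior vertices must avoid a thin sliver hugging each edge), whereas your argument only shields a small angular sector at each corner. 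An interior vertex sitting near the midpoint of $e$, at distance $h$ from $e$ and distance $D$ from the endpoint $A$, has $\angle XAB\approx h/D$; to make this $>\theta$ uniformly you need $h\gtrsim\theta\,|e|$, i.e.\ a quantitative lower bound on the distance from every interior vertex to every edge of $t$, \emph{relative to the edge length}. The leaf property of the single triangle $T_1$ at each corner cannot deliver this; you would need (and would have to extract from \cite{Bishop}) a full boundary collar along each edge of $t$ whose height is comparable to the edge length and which contains no interior vertices, together with the comparability of the side lengths of $t$ coming from the $(\delta,\pi-\delta)$ angle bound. As written, the case analysis silently conflates ``$X$ is not angularly shielded at $A$'' with ``$X$ lies beyond $T_1$,'' and the second condition is strictly weaker.
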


\begin{figure}[H]\centering
  \includegraphics[page=4]{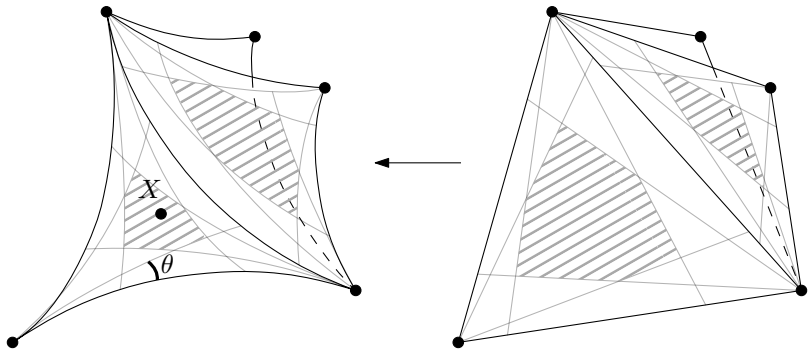}
  \caption{: Bishop's Lemma implies a separation bound in the non-Euclidean case as well: interior vertices must lie in the grey-hatched area. }
  \label{fig:separation}
\end{figure}

Combining Bishop's Lemma with Proposition \ref{dissection}, Lemma D and Lemma C gives us the following Lemma:

\begin{mainlemma}
For any constant curvature triangle complex $T$, there exists $\delta>0$, $N\in \mathbb N$ and $\theta>0$ (the latter depending only on $\delta$), such that, for all $n>N$:

\begin{enumerate}[label=(\roman*)]
  \item $\Phi^{-1}_{T_n}(B(\overline{T_n}))$ is an acute dissection of $T$ with all triangle angles lying in $(\delta, \frac{\pi}{2}-\delta)$.
  \item For each triangle $t$ in $T_n$, and each vertex $X$ of the restriction of $\Phi^{-1}_{T_n}(B(\overline{T_n}))$ to the interior of $t$, all 6 angles formed by $X$ and the edges of $t$ are strictly greater than $\theta$.
\end{enumerate}
\end{mainlemma}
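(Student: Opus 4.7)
The plan is to combine Proposition \ref{dissection} with Bishop's Lemma applied in the Euclidean comparison complex, by choosing $N$ large enough that the pullback diffeomorphisms $\phi_t^{-1}$ distort the relevant angles negligibly. Part (i) is essentially a restatement of Proposition \ref{dissection}, so the real content is transferring the interior-vertex separation property from the Euclidean side to the constant curvature side.

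First, I would fix $\delta_0 > 0$ via Lemma B so that all angles of every $T_n$ lie in $(\delta_0, \pi - \delta_0)$. Lemma 4.1 then transfers this bound to each Euclidean comparison complex $\overline{T_n}$. Bishop's Theorem applied to each $\overline{T_n}$ produces the acute refinement $B(\overline{T_n})$ with angles in $(\Delta, \tfrac{\pi}{2} - \Delta)$ for a constant $\Delta > 0$ depending only on $\delta_0$ (this is Lemma C), and Bishop's Lemma applied to the same complex furnishes a constant $\theta' > 0$, also depending only on $\delta_0$, such that for each Euclidean triangle $\overline{t}$ of $\overline{T_n}$ and each interior vertex $\overline{X}$ of $B(\overline{T_n})|_{\overline{t}}$, all 6 angles formed by $\overline{X}$ with the edges of $\overline{t}$ strictly exceed $\theta'$. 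Crucially, both $\Delta$ and $\theta'$ are uniform in $n$.

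Next I would set $\iota = \tfrac{1}{2}\min(\Delta, \theta')$ and apply Lemma D to obtain $\epsilon > 0$ such that every geodesic triangle contained in a ball of radius $\epsilon$ admits a geodesic diffeomorphism to its Euclidean comparison triangle perturbing any angle between three points by less than $\iota$. Lemma A then produces $N \in \mathbb{N}$ beyond which every triangle of $T_n$ lies inside some ball of radius $\epsilon$. For $n > N$, assemble the family $\Phi_{T_n}$ of all such diffeomorphisms, set $\delta = \tfrac{\Delta}{2}$ and $\theta = \tfrac{\theta'}{2}$, and consider the pulled-back dissection $\Phi^{-1}_{T_n}(B(\overline{T_n}))$. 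Conclusion (i) follows immediately from the argument of Proposition \ref{dissection}. For conclusion (ii), note that an interior vertex $X$ of the restriction of $\Phi^{-1}_{T_n}(B(\overline{T_n}))$ to a triangle $t$ of $T_n$ corresponds under $\phi_t$ to an interior vertex $\overline{X}$ in $\overline{t}$ of $B(\overline{T_n})|_{\overline{t}}$, and each of the 6 angles in question at $X$ is the pullback of a corresponding angle at $\overline{X}$; since the Euclidean angles exceed $\theta'$ and $\phi_t^{-1}$ distorts angles by less than $\iota \leq \theta'/2$, the pulled-back angles exceed $\theta' - \iota \geq \theta$.

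The main obstacle is ensuring the Euclidean separation bound $\theta'$ genuinely survives the nonlinear pullback. This is handled by the uniformity of $\theta'$ (which depends only on $\delta_0$, itself independent of $n$) combined with the uniformity of Lemma D (which lets us trade smallness of triangles for smallness of angle distortion). Everything else is just bookkeeping: taking $N$ to be the maximum of the thresholds required by parts (i) and (ii).
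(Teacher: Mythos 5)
Your proposal is correct and follows essentially the same route as the paper: uniform bounds from Lemma C and Bishop's Lemma on the Euclidean side, then a single application of Lemma D with a small enough distortion parameter and Lemma A to choose $N$, so that the separation angles survive the pullback. Your choice of $\iota=\tfrac{1}{2}\min(\Delta,\theta')$ even handles slightly more cleanly than the paper the fact that one family $\Phi_{T_n}$ must simultaneously satisfy the distortion requirements of both (i) and (ii).
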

\begin{proof}
Choose $\delta>0$ and $N\in\mathbb N$ according to Proposition \ref{dissection}, so that, for all $n>N$, $\Phi^{-1}_{T_n}(B(\overline{T_n}))$ is an acute dissection of $T$ with angles in $(\delta, \frac{\pi}{2}-\delta)$. Lemma C allows us to choose $\Delta>0$ such that, for all $n\in\mathbb N$, $B(\overline{T_n})$ has angles in $(\Delta, \frac{\pi}{2}-\Delta)$. We can thus apply Bishop's lemma to infer the existence of $\omega>0$, such that, for all $n\in\mathbb N$, $B(\overline{T_n})$ satisfies property $(\star)$ with angle $\omega$. Applying Lemma D with $\iota=\frac{\omega}{2}$, we know that there exists $\epsilon>0$ such that, for any triangle $t$ of $T_n$ contained in a ball of radius $\epsilon$, there exists a geodesic diffeomorphism $\phi_t$, modifying angles between any three points of $t$ by strictly less than $\iota$. By Lemma A, we can select an integer $M>N$ such that, for all $n>M$, all triangles of $T_{n}$ sit inside a ball of radius $\epsilon$. Fix such an integer $n$ and consider an interior vertex $X$ of a triangle $t$ of $T_n$. $X$ is mapped to a unique interior vertex $\phi_t(X)$ in $\overline{t}$, for which we know that the separation property $(\star)$ is verified with angle $\omega$. Lemma D then guarantees that the separation property is also verified for $X$ in $t$ with angle $\theta\coloneqq \omega-\iota=\frac{\omega}{2}$.
\end{proof}

\begin{proof}[Proof of the Main Theorem]
Recall that we need only prove the main theorem for a constant curvature triangle complex $T$. Given such a complex, Lemma E guarantees that we can choose $\delta$,  $N$ and $\theta$ such that properties $(i)$ and $(ii)$ hold. Fix an integer $n>N$ and, for each vertex $\overline{S}$ of $B(\overline{T_n})$ lying on an edge $\overline{P}\>\overline{Q}$ of $\overline{T_n}$ which is shared by (possibly infinitely many) triangles of type $\overline{t_1}, \overline{t_2}, \ldots, \overline{t_M}$, $M\geq 2$, merge all the vertices $\phi_{t_1}^{-1}(\overline{S}), \phi_{t_2}^{-1}(\overline{S}), \ldots, \phi_{t_M}^{-1}(\overline{S})$ of $\Phi^{-1}_{T_n}(B(\overline{T_n}))$ together into the unique point $S$ lying on the edge $PQ$ of $T_n$ such that $S$ is the image of $\overline{S}$ under the unique isometry from $\overline{P}\>\overline{Q}$ to $PQ$ which maps $\overline{P}$ to $P$. By construction, this process produces a refining triangulation $A_n$ of $T$ as we have now restored the exact combinatorial structure of $B(\overline{T_n})$ . Note that this process is still well-defined even in the event where there are infinitely many triangles meeting along $\overline{P}\>\overline{Q}$, in which case each of the $M$ points $\phi_{t_1}^{-1}(\overline{S}), \phi_{t_2}^{-1}(\overline{S}), \ldots, \phi_{t_M}^{-1}(\overline{S})$ of $\Phi^{-1}_{T_n}(B(\overline{T_n}))$ may correspond to infinitely many triangles of a single isometry class of triangles.  \\

\textbf{Claim.} $A_n$ is an acute triangulation of $T$.

\begin{proof}[Proof of the Claim]
The only triangles of $\Phi^{-1}_{T_n}(B(\overline{T_n}))$ affected by this last merging step are triangles with at least one vertex lying on an edge of $T_n$. There are 3 possibilities, depending on the number of vertices lying on an edge of $T_n$ (1, 2 or 3). However, by construction, $B(\overline{T_N})$ does not have triangles with all 3 vertices lying on an edge of $\overline{T_N}$, and thus neither does $\Phi^{-1}_{T_n}(B(\overline{T_n}))$ (see \cite{Bishop}). Both of the remaining cases (see Fig. \ref{fig:cases}) will be resolved with the same treatment.

\begin{figure}[H]\centering
  \includegraphics[page=6]{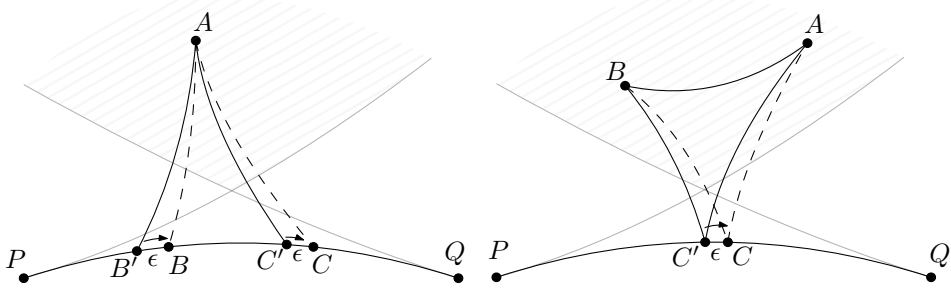}
  \caption{: The two possible types of problematic triangles: (Left) triangles with two vertices lying on an edge of $T_n$ and (Right) triangles with one vertex lying on an edge of $T_n$.}
  \label{fig:cases}
\end{figure}

Let $ABC$ be such an affected triangle of $\Phi^{-1}_{T_n}(B(\overline{T_n}))$, as seen after the merging step. Let us assume that $ABC$ lies inside a triangle $PQR$ of $T_n$, and suppose without loss of generality that $PQ$ is the edge supporting either one or two vertices of $ABC$ (which we will assume to be $A$ and/or $B$). We denote by $a$, $b$, and $c$ the lengths of the side of $ABC$ opposite $A$, $B$ and $C$; and by $\alpha,\beta,\gamma$ the angles based at the same vertices. For all $i\in [M]$, we shall write $\kappa_i$ to denote the curvature of the surface of constant curvature supporting the triangle $t_i$. Fix then $i\in[M]$ and let us write $A'$, $B'$ and $C'$ to denote the positions the vertices $A$, $B$ and $C$ occupied in the triangle $PQR$ before the merging operation associated with the triangle $t_i$; and the associated angles by $\alpha'$, $\beta'$ and $\gamma'$. Recall that two quantities $x$ and $y$, both depending on a given parameter, are called \textit{comparable} if there exists a constant $L<\infty$ such that, for all value of the parameter, we simultaneously have $x<Ly$ and $y<Lx$. In our context, the parameter will be understood to be the number of steps $n>N$ in the iterated medial subdivision of $T$.\\

\textbf{Fact.} The edge lengths $a'$, $b'$, $c'$ and $d_{M_{\kappa_i}^2}(P,Q)$ are all comparable to one another.
\begin{proof}[Proof of the fact]
The uniform angle bound given by the first implication $(i)$ of Lemma E, combined with the (spherical or hyperbolic) law of sines (see \cite{WT}) shows that the quantities $a'$, $b'$ and $c'$ are comparable. For the same reason, all side-lengths of the triangle $PQR$ are comparable. Additionally, the second implication $(ii)$ of Lemma E tells us that $b'$ (for example) is comparable from below to $d_{M_{\kappa_i}^2}(P,Q)$. Indeed, thanks to Lemma E $(ii)$, the (spherical or hyperbolic) tangent formula (\cite{Carslaw}, p. 100) implies a lower bound on $b'$ in terms of $d_{M_{\kappa_i}^2}(Q,R)/2$ and $\theta$. On the other hand, $PQR$ is a convex geodesic triangle and therefore $b'$ is upper-bounded by the largest side-length of $PQR$, as $A'B'C'$ lies inside $PQR$. This largest side-length is either $d_{M_{\kappa_i}^2}(P,Q)$ itself or is comparable to it by our previous observation. Note that this treatment is valid in both cases since the existence of a single vertex subject to Lemma E $(ii)$ is enough.
\end{proof}

Let us write:

\[
\epsilon(\overline{S}) \coloneqq \max\{d_{M_{\kappa_i}^2}(\phi_{t_i}^{-1}(\overline{S}),S) \> | \> i \in [M]\}
\]
and:
\[
\epsilon\coloneqq \max \{\epsilon(\overline{S}) \> | \> \overline{S}\in E(\overline{T_n})\}
\]

where $E(\overline{T_n})$ denotes the set of open edges of $\overline{T_n}$. Note that when $\bar{S}$ is the endpoint of an edge of $\overline{T_n}$, all its preimages under $\phi_{t_1}$, $\phi_{t_2}$, \ldots, $\phi_{t_M}$ coincide with $S$ at a vertex of $T_n$, so that the function $\epsilon:E(\overline{T_n}) \rightarrow \mathbb R$ has compact image and $\epsilon$ is well defined. By definition, we have that, for all $i\in[M]$, $d_{M_{\kappa_i}^2}(A,A')<\epsilon$, $d_{M_{\kappa_i}^2}(B,B')<\epsilon$ and $d_{M_{\kappa_i}^2}(C,C')<\epsilon$. Recalling the definition of $S$, we can express $\epsilon$ as the following quantity:

\[
\epsilon = \max_{i\in [M]}\left| d_{M_{\kappa_i}^2}(P,B') \left( \frac{|\overline{P}\>\overline{B'}|}{d_{M_{\kappa_i}^2}(P,B')} - 1 \right) \right| <   \max_{i\in [M]}d_{M_{\kappa_i}^2}(P,Q) \cdot \max_{i\in [M]}\left| \frac{|\overline{P}\>\overline{B'}|}{d_{M_{\kappa_i}^2}(P,B')} - 1 \right|
\]

The previous fact then implies the following inequality:

\[
\epsilon < \max_{i\in [M]}\{L^i_a,L^i_b,L^i_c\} \cdot \max\{a,b,c\}  \cdot\max_{i\in [M]}\left| \frac{|\overline{P}\>\overline{B'}|}{d_{M_{\kappa_i}^2}(P,B')} - 1  \right|
\]

where $L^i_a$, $L^i_b$ and $L^i_c$ are the comparability constants of $a'$, $b'$, $c'$ with respect to $d_{M_{\kappa_i}^2}(P,Q)$.

By Claim 2 of Lemma D and Lemma A, we know that the expressions within the absolute value becomes arbitrarily close to 0 as $n$ gets arbitrarily large. This shows that, for $n$ large enough, $\epsilon$ can be chosen to be as small as required with respect to $a'$, $b'$, and $c'$. The hyperbolic and spherical cosine law (\cite{WT}) then implies that, for $n$ large enough, the value of $\cos \alpha$ can be taken arbitrarily close to that of $\cos \alpha'$ (and likewise for that of $\cos \beta$ and $\cos \gamma$).

\end{proof}

\end{proof}

\section{Final Remarks and Future Work: Towards Acute Triangulations of Riemannian Polygonal Complexes}
\label{5}
We first point out to the reader that our proof of acuteness ultimately relies on a compactness argument of Bishop (\cite{Bishop}) and therefore does not provide explicit angle bounds. Furthermore, a uniform bound is also lost, as we are relying on the value of $\delta$ given by Lemma B. In \cite{Flo}, it is shown that this $\delta$ may be arbitrarily small depending on the original size of the complex under consideration. 

Lastly, we remark that Beltrami's theorem (see \cite{Carmo} for example) provides an obstruction to a direct extension of our method to more general Riemannian metrics. Indeed, any metric which is locally projectively equivalent to the Euclidean metric is necessarily a constant curvature metric. Nevertheless, an extension to general Riemannian metrics should be straightforward by considering small enough patches where the curvature is near-constant. In upcoming work, we intend to bootstrap the results of \cite{Flo} and the present article to recover Lemmas A, B and D in the general Riemannian setting.

\section{Acknowledgements}
\label{sec6}

I would like to express my deepest gratitude to Piotr Przytycki, for his \mbox{unwavering} support and guidance; and to Christopher Bishop, for the clarity of his work in \cite{Bishop} and his generous help. I also thank Mathijs Wintraecken for his useful remarks and comments.


\begin{bibdiv}
\begin{biblist}

\bib{Beltrami1}{book}{
  title={Saggio di interpretazione della geometria non-euclidea},
  author={Beltrami, Eugenio},
  year={1868},
  publisher={Stab. Tip. De Angelis}
}

\bib{Beltrami2}{book}{
  title={Teoria fondamentale degli spazii di curvatura costante},
  author={Beltrami, Eugenio},
  journal={Annali di Matematica Pura ed Applicata (1867-1897)},
  volume={2},
  number={1},
  pages={232--255},
  year={1868},
  publisher={Springer}
}

\bib{BS}{article}{
  title={Nonobtuse triangulations of PSLGs},
  author={Bishop, Christopher J.},
  journal={Discrete \& Computational Geometry},
  volume={56},
  number={1},
  pages={43--92},
  year={2016},
  publisher={Springer}
}

\bib{Bishop}{article}{
  title={Uniformly acute triangulations of PSLGs},
  author={Bishop, Christopher J.},
  year={2021}
}


\bib{BH}{book}{
  title={Metric spaces of non-positive curvature},
  author={Bridson, Martin. R.},
  author={Haefliger, Andr{\'e}},
  volume={319},
  year={2013},
  publisher={Springer Science \& Business Media}
}

\bib{Flo}{article}{
    title={Iterated Medial Subdivision in Surfaces of Constant Curvature},
    author={Brunck, Florestan},
    journal={Discrete \& Computational Geometry},
    year={2022},
    publisher={Springer}
}

\bib{BZ}{article}{
  title={Polyhedral embedding of a net},
  author={Burago, Y.},
  author={Zalgaller, V.},
  journal={Vestnik Leningrad. Univ},
  volume={15},
  number={7},
  pages={66--80},
  year={1960}
}

\bib{Carmo}{book}{
  title={Differential geometry of curves and surfaces: revised and updated second edition},
  author={Do Carmo, Manfredo P.},
  year={2016},
  publisher={Courier Dover Publications}
}

\bib{HG}{article}{
  title={Hyperbolic geometry},
  author={Cannon, J.W.},
  author={Floyd, W.J},
  author={Kenyon, R},
  author={Parry, W.R.},
  journal={Flavors of geometry},
  volume={31},
  number={59-115},
  pages={2},
  year={1997}
}

\bib{Carslaw}{book}{
 title={The elements of non-Euclidean plane geometry and trigonometry},
 author={H.S. Carslaw},
 year={1916},
 publisher={Michigan Historical Reprint Series}
}

\bib{CDV}{article}{
  title={Triangulations presque {\'e}quilat{\'e}rales des surfaces},
  author={Colin De Verdiere, Y.},
  author={Marin, A},
  journal={Journal of Differential Geometry},
  volume={32},
  number={1},
  pages={199--207},
  year={1990},
  publisher={Citeseer}
}


\bib{guggenheimer1}{article}{
  title={The Jordan curve theorem and an unpublished manuscript by Max Dehn},
  author={Guggenheimer, H},
  journal={Archive for History of Exact Sciences},
  pages={193--200},
  year={1977},
  publisher={JSTOR}
}

\bib{guggenheimer2}{article}{
  title={The Jordan and Schoenflies theorems in axiomatic geometry},
  author={Guggenheimer, H},
  journal={The American Mathematical Monthly},
  volume={85},
  number={9},
  pages={753--756},
  year={1978},
  publisher={Taylor \& Francis}
}


\bib{Itoh-Yuan}{article}{
  title={Acute triangulations of flat tori},
  author={Itoh, Jin-ichi},
  author={Yuan, Liping},
  journal={European journal of combinatorics},
  volume={30},
  number={1},
  pages={1--4},
  year={2009},
  publisher={Elsevier}
}

\bib{IZ}{article}{
  title={Acute triangulations of triangles on the sphere},
  author={Itoh, J.I},
  author={Zamfirescu, Carol T.},
  journal={Rendicounti Del Circolo Matematico Di Palermo Serie II (suppl. 70)},
  pages={59--64},
  year={2002}
}

\bib{KW}{article}{
  title={Coxeter groups, hyperbolic cubes and acute triangulations},
  author={Kim, S-H},
  author={Walsh, Genevieve, S.},
  journal={Journal of Topology},
  volume={9},
  number={1},
  pages={117--142},
  year={2016},
  publisher={Oxford University Press}
}

\bib{Lagrange}{article}{
  title={Sur la construction des cartes g{\'e}ographiques},
  author={de Lagrange, Joseph Louis},
  journal={Nouveaux M{\'e}moires de l’Academie Royale des Sciences et Belles Lettres de Berlin},
  pages={161--210},
  year={1779}
}

\bib{Poincare}{article}{
  title={Poincar{\'e}'s review of Hilbert's “foundations of geometry.”},
  author={Poincar{\'e}, Henri},
  journal={Bulletin of the American Mathematical Society},
  volume={10},
  number={1},
  pages={1--23},
  year={1903},
  publisher={American Mathematical Society}
}

\bib{SA}{article}{
  title={Acute and nonobtuse triangulations of polyhedral surfaces},
  author={Saraf, Shubhangi},
  journal={European Journal of Combinatorics},
  volume={30},
  number={4},
  pages={833--840},
  year={2009},
  publisher={Elsevier}
}

   
\bib{WT}{book}{
  title={Three-dimensional geometry and topology},
  author={Thurston, William P.},
  volume={35},
  year={1997},
  publisher={Princeton university press}
}


\bib{Yuan-Zam}{article}{
  title={Acute Triangulations of Flat Mobius Strips},
  author={Yuan, Liping},
  author={Zamfirescu, Tudor},
  journal={Discrete \& Computational Geometry},
  volume={37},
  number={4},
  pages={671--676},
  year={2007},
  publisher={Springer}
}

\bib{Zam}{article}{
  title={Survey of two-dimensional acute triangulations},
  author={Zamfirescu, Carol T.},
  journal={Discrete Mathematics},
  volume={313},
  number={1},
  pages={35--49},
  year={2013},
  publisher={Elsevier}
}



\end{biblist}
\end{bibdiv}
\end{document}